\newtheorem{theorem}{Theorem}
\newcommand{\TT}{\mathcal{T}} 
\newcommand{\NN}{\mathcal{N}} 
\newcommand{\CC}{\mathcal{C}} 
\newcommand{\LL}{\mathcal{L}} 
\newcommand{\qos}{\textsf{q}}
\newcommand{\BW}{\textsf{BW}}
\newcommand{\child}{\textsf{children}}
\newcommand{\pat}[2]{\textsf{path}[{#1} \to {#2}]}
\newcommand{\REP}{\textsc{Replica Placement}\xspace}
\newcommand{\CLOSEST}{\textit{Closest}\xspace}
\newcommand{\XX}{\textit{ORP}\xspace}
\begin{document}
\RRInumber{2007-10}
\RRNo{6233}
\RRItitle{Optimal Replica Placement in Tree Networks with QoS and Bandwidth Constraints and the Closest Allocation Policy}
\RRItitre{Placement optimal de r\'epliques dans les r\'eseaux en arbre avec des contraintes de qualit\'e de service et de bande passante et avec la politique d'allocation \CLOSEST}

\RRIdate{June 2007}

\RRIauthor{Veronika Rehn-Sonigo}

\RRIthead{Optimal Replica Placement in Tree Networks}
\RRIahead{V. Rehn-Sonigo}

\RRIkeywords{Replica placement, tree networks, \CLOSEST policy, quality of service, bandwidth constraints.}

\RRImotscles{Placement de r\'epliques, r\'eseaux en arbres, politique \CLOSEST, qualit\'e de service, limitation de bande passante.}

\RRIabstract{This paper deals with the replica placement problem on fully homogeneous tree networks known as the \REP optimization problem. The client requests are known beforehand, while the number and location of the servers are to be determined. We investigate the latter problem using the \CLOSEST access policy when adding QoS and bandwidth constraints. We propose an optimal algorithm in two passes using dynamic programming.}

\RRIresume{Dans ce papier, nous traitons le probl\`eme du placement de r\'epliques dans des r\'eseaux en arbres compl\`etement homog\`enes, connu sous le nom du probl\`eme d'optimisation de \REP. Les requ\^etes des clients sont connues {\it a priori}, mais le nombre et les emplacements des serveurs restent \`a d\'eterminer. Nous \'etudions ce dernier probl\`eme en utilisant la politique d'acc\`es \CLOSEST, en ajoutant de la qualit\'e de service et des limitations de bande passante. Nous proposons un algorithme optimal en deux passes qui utilise la programmation dynamique.}

\RRItheme{\THNum}
\RRIprojet{GRAAL}
\RRNo{6233}

\RRImaketitle

\section{Introduction}
\label{sec:intro}

This paper deals with the problem of replica placement in tree networks with
Quality of Service (QoS) guarantees and bandwidth constraints.
Informally, there are clients issuing several requests per time-unit, to be satisfied by
servers with a given QoS and respecting the bandwidth limits of the interconnection links. The clients
are known (both their position in the tree and their number of
requests), while the number and location of the servers are to be
determined. A client is a leaf node of the tree, and its requests can
be served by one or several internal nodes.
Initially, there are no replicas; when a node is equipped with a
replica, it can process a number
of requests, up to its capacity limit (number of requests served by time-unit).
Nodes equipped with a
replica, also called servers, can only serve clients located in
their subtree (so that the root, if equipped with a replica, can serve
any client); this restriction is usually adopted to enforce the
hierarchical nature of the target application platforms, where a node
has knowledge only of its parent and children in the tree. Every
client has some QoS constraints: its requests must be served within a
limited time, and thus the servers handling these requests must not be
too far from the client.

The rule of the game is to assign replicas to internal nodes so that some optimization
function is minimized and QoS as well as bandwidth constraints are respected. Typically,
this optimization function is the total
utilization cost of the servers. We restrict the problem to the most popular access policy called \CLOSEST, where
each client is allowed to be served only by the closest replica in the path from itself up to the root.

In this paper we study this optimization problem, called \REP, and we
restrict the QoS in terms of number of hops.
This means for instance that the requests of a client who has a QoS range of
$5$ must be treated by one of the first five internal nodes on the path from
the client up to the tree root.

We point out that the distribution tree (clients and nodes) is fixed
in our approach.
This key assumption is quite natural for a broad spectrum of
applications, such as electronic, ISP, or VOD
service delivery. The root server has the original copy of the database
but cannot serve all clients directly,
so a distribution tree is deployed to provide a hierarchical and
distributed access to replicas of the
original data. On the contrary, in other, more decentralized, applications
(e.g. allocating Web mirrors in distributed
networks), a two-step approach is used: first determine a ``good'' distribution
tree in an arbitrary interconnection graph, and then determine a ``good'' placement of replicas among the tree
nodes. Both steps are interdependent, and the problem is much more complex, due to the combinatorial
solution space (the number of candidate distribution trees may well be exponential).

Many authors deal with the \REP optimization problem. Most of the
papers 
neither deal with QoS nor with bandwidth constraints. Instead they consider average system performance as
total communication cost or total accessing cost. Please refer
to~\cite{rr2006-30} for a detailed description of related work with no
QoS constraints.

Cidon et al.~\cite{Cidon2002} studied an instance of \REP with
multiple objects, where all requests of a client are served by the
closest replica (\CLOSEST policy). In this work, the objective
function 
integrates a communication cost, which
can be seen as a
substitute for QoS. Thus, they minimize the average
communication cost for all the clients rather than ensuring a given
QoS for each client. They target fully homogeneous platforms since
there are no server capacity constraints in their approach.
A similar instance of the problem has been studied by Liu et
al~\cite{PangfengLiu06}, adding a QoS in terms of a range limit,
and whose objective is to minimize the number of
replicas. In this latter approach, the servers are homogeneous, and
their capacity is bounded. Both~\cite{Cidon2002,PangfengLiu06} use
a dynamic programming algorithm to find the optimal solution.

Some of the first authors to introduce actual QoS constraints in the
problem were Tang and Xu~\cite{Tang2005}. In
their approach, the QoS corresponds to the latency requirements of
each client. Different access policies are considered. First, a
replica-aware policy in a general graph with heterogeneous nodes
 is proven to be NP-complete.
When the clients do not know where the replicas are (replica-blind
policy), the graph is simplified to a tree (fixed routing scheme)
with the \CLOSEST policy, and in this case again it is possible to
find an optimal dynamic programming algorithm.

Bandwidth limitations are taken into account when Karlsson et al.~\cite{Karlsson02,Karlsson04} compare different
objective functions and several heuristics to solve NP-complete problem instances. They do not take QoS
constraints into account, but instead integrate a communication cost in the objective function as
was done in~\cite{Cidon2002}. Integrating the communication cost into the objective function can be viewed as
a Lagrangian relaxation of QoS constraints.
Please refer to~\cite{rr2006-48} for more related work dealing with
QoS constraints.

In this paper we propose an efficient algorithm called 
\textbf{Optimal Replica Placement} (\XX) to determine optimal 
locations for placing replicas in the \REP problem including QoS 
and bandwidth. Our work provides a major extension
of the algorithm of Liu et al.~\cite{PangfengLiu06}, which was 
already mentioned above.
Liu et al.~\cite{PangfengLiu06} proposed an algorithm 
\textbf{Place-replica} to find an optimal set of replicas on 
homogeneous data grid trees including QoS constraints in terms of 
distance but without bandwidth constraints.
Our approach leads to two important extensions. First of all, we 
separate the set of clients from the set of servers, while Liu et 
al suppose clients to be servers with a double functionality. Our 
model can simulate the latter model
while the converse is not true.  Indeed, we can model 
client-server nodes by inserting a fictive node before the client 
which can take the role of a server. The approach of Liu et al. in 
contrast does not offer the possibility to model clients without 
server functionality.

Our second major contribution is the introduction of bandwidth 
constraints. This is an important modification of the requirements 
as QoS and bandwidth are of a completely different nature. QoS is 
a constraint that belongs to a node locally, hence each client has 
to cope with its own limitation. Bandwidth constraints in contrast 
have a global influence on the resources as a link may be shared 
by multiple clients and consequently all of them are concerned. 
Therefore it is not obvious whether the problem with these 
completely different constraint types would remain  polynomial or 
would become NP-hard.

The rest of the paper is organized as follows. 
Section~\ref{sec:notations}
introduces our main notations used in \REP problems. 
Section~\ref{sec:algo} is dedicated to the presentation of our 
polynomial algorithm: the proper terminology of the algorithm is 
introduced in Section~\ref{sec:terminology}. The subsections~\ref{sec:phase1} 
and~\ref{sec:phase2} treat the different 
phases and explaining examples can be found in 
Sections~\ref{sec:ex1} and~\ref{sec:ex2}. Complexity is subject of 
Section~\ref{sec:complexity}, whereas optimality is proven in 
Section~\ref{sec:optimality}. Section~\ref{sec:conclusion} 
summarizes our work.


\section{Notations}
\label{sec:notations}
This section familiarizes with our basic notations.
We consider a distribution tree $\TT$ whose nodes are partitioned 
into a set of clients $\CC$
and a set of internal nodes $\NN$ ($\NN \cap \CC = \emptyset$).
The clients are leaf nodes of the tree, while $\NN$ is the set of
internal nodes. Let $r$ be the root of the tree.  The set of tree 
edges (links) is denoted as $\LL$. Each link $l$ owns a bandwidth 
limit $\BW(l)$ that can not be exceeded.

A \emph{client} $v \in \CC$ is making $w_v$ requests per time unit
to a database. Each client has to respect its personal 
{\it Quality of Service} constraints (QoS), where $\qos(v)$ 
indicates the range limit in hops for $v$ upwards to the root 
until a database replica has to be reached. 
%
A \emph{node} $j\in \NN$ may or may not have been provided with a 
replica of the database.
Nodes equipped with a replica ({\em i.e.} servers) can process up 
to
$W$ requests per time unit from clients in their subtree. In other
words, there is a unique path from a client $v$ to the root of the 
tree, and each node in this path
is eligible to process all the requests issued by $v$ when
provided with a replica. We denote by $R \subseteq \NN$ the entire
set of nodes equipped with a replica.




\section{Optimal Replica Placement Algorithm (\XX)}
\label{sec:algo}
In this section we present \XX, an algorithm to solve the \REP 
problem using the \CLOSEST policy with QoS and bandwidth 
constraints.
For this purpose, we modify an algorithm of Lin, Liu and Wu 
\cite{PangfengLiu06}. Their algorithm {\it Place-replica} is 
used on homogeneous conditions with QoS constraints but without 
bandwidth restrictions. To be able to use the algorithm, we have 
to modify the original platform. We transform the tree $T$ in a 
tree $T^{*}$ by adding a new root $r^{+}$ as father of the 
original root $r$ (see Figure~\ref{fig:transform}). $r^{+}$ is 
connected to $r$ via a link $l_{0}$, where $\BW(l_{0}) = 0$. As 
the bandwidth is limited to $0$, no requests can pass above $r$, 
so that this artificial transformation for computation purposes 
can be adapted to any tree-network.

\begin{figure}
  \centering
  \includegraphics[width=0.25\textwidth]{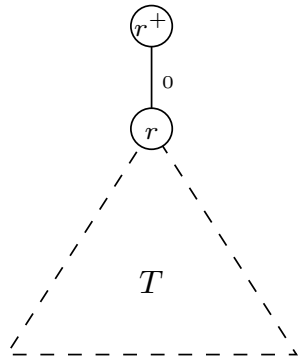}
  \caption{Appearance of $T^{*}$}
  \label{fig:transform}
\end{figure}

 A further, only formal transformation, consists in the 
suppression of clients from the tree and hence the consideration 
of their parents as leaves in the following way 
(Figures~\ref{fig:suppress} and~\ref{fig:kept} give an 
illustration): for every parent $v$ who has only 
leaf-children $v_1, .., v_n$ (i.e., all its children are clients),
 we assign the sum of the requests 
of the $v_j$ as its requests $w(v)$, i.e., 
$ w(v) = \sum_{1 \leq j \leq n}w(v_j)$. The associated QoS is 
set to $(\min_{1\leq j \leq n} \qos(v_j))-1$. This transformation
 is possible, as we use the \CLOSEST policy and hence all children
 have to be treated by the same server. From those parents who 
have some leaf-children $v_1, .., v_n$, but also non-leaf 
children $v_{n+1},..,v_{m}$, the clients can not be suppressed 
completely. In this case the leaf-children $v_1, .., v_n$ are 
compressed to one single client $c$ with requests 
$ w(c) = \sum_{1 \leq j \leq n}w(v_j)$ and QoS 
$\qos(c) = \min_{1\leq j \leq n} \qos(v_j)$. Once again this 
compression is possible due to the restriction on the \CLOSEST 
access policy.

\begin{figure}
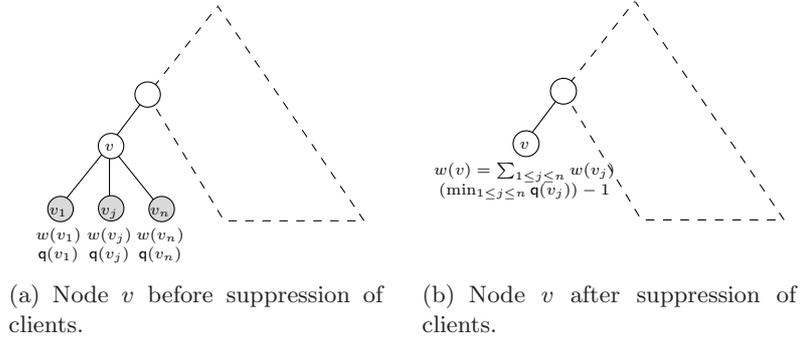

   \centering
   \subfigure[Node $v$ before suppression of clients.]{
 \includegraphics[width=0.3\textwidth]{suppressBefore.fig}
 \label{fig:before}
 }$\quad$
 \subfigure[Node $v$ after suppression of clients.]{
 \includegraphics[width=0.3\textwidth]{suppressAfter.fig}
 \label{fig:after}
 }
 \caption{Suppression of clients}
\label{fig:suppress}
 \end{figure}

\XX works in two phases. In the first phase so called Contribution 
Functions are computed which will serve in the second phase to 
determine the optimal replica placements. In the following some 
new terms are introduced and then the two phases are described in 
detail.

\begin{figure}
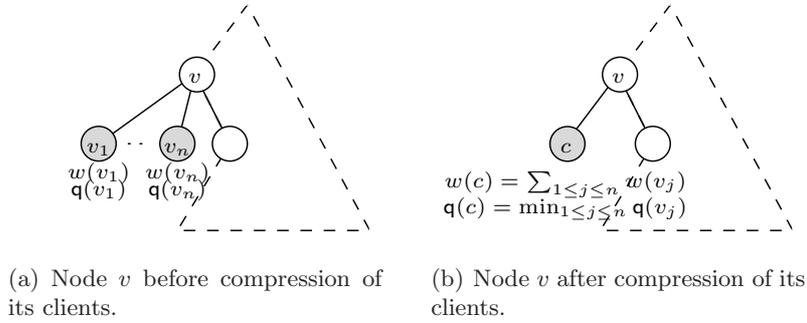

  \centering
  \subfigure[Node $v$ before compression of its clients.]{
    \includegraphics[width=0.3\textwidth]{keptBefore.fig}
  \label{fig:kept_before}
  } $\quad$
  \subfigure[Node $v$ after compression of its clients.]{
    \includegraphics[width=0.3\textwidth]{suppressKept.fig}
    \label{fig:kept_after}
  }
  \caption{Compression of clients.}
  \label{fig:kept}
\end{figure}

\subsubsection{Terminology}
\label{sec:terminology}
Working with a tree $T^{*}$ with root $r^{+}$, we note $t(v)$ the 
subtree rooted by node $v$, and $t'(v) = t(v) - v$, i.e. the 
forest of trees rooted at $v$'s children. The $i$'th ancestor of 
node $v$, traversing the tree up to the root, is denoted by 
$a(v,i)$.

Using these notations, we denote $m(T^{*})$ the minimum 
cardinality set of replicas that has to be placed in tree $T$ such 
that all requests can be treated by a maximum processing capacity 
of $W$ (respecting QoS and bandwidth constraints). In the same 
manner $m(t(v))$ denotes the minimum number of replicas that has 
to be placed in $t'(v)$, such that the remaining requests on node 
$v$ are within $W$. For this purpose we define a contribution 
function $C$. $C(v,i)$ denotes the minimum workload on node 
$a(v,i)$ contributed by $t(v)$  by placing $m(t(v))$ replicas in 
$t'(v)$ and none on $a(v,j)$ for $0 \leq j < i$. The computation 
and an illustrating example are presented below 
(Cf. Section~\ref{sec:ex1} and Section~\ref{sec:ex2}). But before 
we need a last notation. The set $e(v,i)$ denotes the children of 
node $v$ that have to be equipped with a replica such that the 
remaining requests on node $a(v,i)$ are within $W$, there are 
exactly $m(t(v))$ replicas in $t'(v)$ and none on $a(v,j)$ for 
$0 \leq j < i$ and the contribution $t(v)$ on $a(v,i)$ is 
minimized. The computation formula is also given below. Of course 
the compression of leaves is not possible if a client $v$ of the 
original tree $T$ is connected to its father via a communication 
link $l$ that has a lower bandwidth than $v$ requests 
($\BW(l) < w(v)$). In this case we know a priori that there is no 
solution to our problem as $v$'s requests can not be treated.

\begin{figure}
  \centering
  \includegraphics[width=0.3\textwidth]{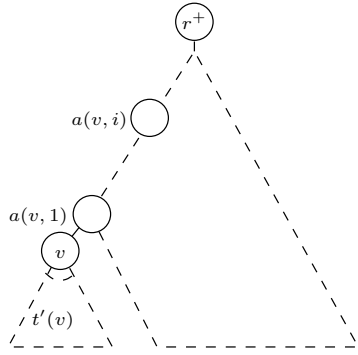}
  \caption{Clarification of the terminology.}
  \label{fig:terminology}
\end{figure}

\subsection{Phase 1: Bottom up computation of set~$e$, amount~$m$ and contribution function~$C$}
\label{sec:phase1}
The computation of $e$, $m$ and $C$ is a bottom up process, 
distinguishing two cases.
\begin{enumerate}
\item {\bf $v$ is a leaf:}

  In this case we do not need $e$ and $m$ and we can directly 
compute the contribution function.
$C(v,i)$ is $w(v)$ when 
$(i \leq \qos(v) \wedge w(v) \leq \min_{\BW}\pat{v}{a(v,i)})$, and 
infinity otherwise.

  We point out that there is no solution if any of the leaves has 
more requests than $W$ or if the bandwidth of any of the clients 
to its parent is not sufficiently high.
  \item {\bf $v$ is an internal node with children $v_{1}, \dots , v_{n}$:}
    \begin{description}
    \item[$i = 0$:]
      If the contribution on $v$ of its children, i.e. the 
incoming requests on $v$ is bigger than the processing capacity of 
inner nodes $W$, we know we have to place some replicas on the 
children to bound the incoming requests on $W$. To find out which 
children have to be equipped with a replica, we take a look at 
the $C(v_{j},1)$-values of the children. The set $e(v,0)$ is used 
to store the $v_{j}$'s that are determined to be equipped with a 
replica. Hence the procedure is the following:
      \begin{itemize}
      \item
	$e(v,0) = \emptyset$
      \item
	while($\sum_{v_{j} \notin e(v,0)} C(v_{j},1) > W$)\\
	add $v_j \in \NN$ with biggest $C(v_{j},1)$ to $e(v,0)$
      \end{itemize}
      Note that the set $\NN$ used in the procedure still 
corresponds to the set of internal nodes of the original tree $T$. 
So we can add leaf nodes of $T^{*}$ that are inner nodes in $T$, 
but we can not add compressed client nodes.
      Note furthermore that there is no client that is added to 
$e(v,0)$. Besides we remark that there is no valid solution within 
$W$ and the present QoS and bandwidth constraints, when all 
children $v_j \in \NN$ of $v$ are equipped with a replica and the 
incoming requests do not fit in $W$. Of course this holds also 
true in the case $i > 0$.
      Subsequently, the value of $m(t(v))$ is determined easily: 
$m(t(v)) = \sum_{1\leq j \leq n}m(t(v_{j})) + |e(v,0)|$. We remind 
that $m(t(v))$ indicates the minimum number of replicas that have 
to be placed in $t'(v)$ to keep the number of contributed requests 
inferior to $W$.
      Finally, the computation of the contribution function :
      $$C(v,0) = \sum_{v_{j} \notin e(v,0)} C(v_{j},1)$$.

    \item[$i > 0$:]
      Treating node $v$, we want to compute the contribution on 
$a(v,i)$. As for $i=0$, we start computing the set $e(v,i)$:
      \begin{itemize}
      \item
	$e(v,i) = \emptyset$
      \item
	while($\sum_{v_{j} \notin e(v,i)} C(v_{j},i+1) > W$)\\
	add $v_j \in \NN$ with biggest $C(v_{j},i+1)$ to $e(v,i)$
      \end{itemize}
      The computation of the contribution function follows a 
similar principle:
      \begin{equation}\label{eq:cont}
	C(v,i) = \begin{cases}\sum_{v_{j} \notin e(v,i)} C(v_{j},i+1) , & \text{if}\; |e(v,i)| = |e(v,0)|\\
	  \infty, & \text{otherwise}\end{cases}
      \end{equation}
      $C(v,i)$ is set to $\infty$, when the number of $|e(v,0)|$ 
replicas placed among the children of $v$ is not sufficient to 
keep the contributed requests on $a(v,i)$ within $W$.

    \end{description}
\end{enumerate}

  \subsection{Example of Phase 1}
\label{sec:ex1}
Consider the tree in Figure~\ref{fig:proofBW} and a processing 
capacity of inner nodes fixed to $W = 15$. The tree has already 
been transformed. So nodes $x$ and $y$ are compressed client-leaves 
(grey scaled in the figure), whereas all other leaves correspond to 
servers (former inner nodes, hence nodes that are within $\NN$). 
We start with the computation of all $C(v,i)$-values of all leaves. 
Leaf $l$ for example has $C(l,0) = 3$ as it holds 3 requests. As 
the link from $l$ to $e$ has a bandwidth of $4$, and the QoS is 
$2$, the requests of $l$ can ascent to node $e$ and hence the 
contribution of $l$'s requests on node $e$, $C(l,1)$, is $3$. In 
the same manner, $C(l,2)$, i.e. the contribution of $l$'s requests 
on node $b$ is $3$ as well. But then the QoS range is exceeded and 
hence the requests of $l$ can not be treated higher in the tree. 
Consequently the contributions on nodes $a$ and $a^{+}$ ($C(l,3)$ 
and $C(l,4)$) are set to infinity. Another example: Leaf $i$ owns 
$7$ requests, but the link from $i$ to its parent $c$ has a lower 
bandwidth, and hence the contribution of $i$ on $c$, $C(i,1)$, has 
to be set to infinity. The whole computation table for the 
leaf-contributions is given in Table~\ref{tab:ex-leaf}.
\begin{figure}
\centering
\includegraphics[width=0.45\textwidth]{ex-proof-bw.fig}
\caption{Example}
\label{fig:proofBW}
\end{figure}

\begin{table}
 \centering
 \begin{tabular}{|r|c c c c c c c c c c c|}
   \hline
   & l & f & x & m & n & h & i & o & p & k & y\\
   \hline
   \hline
   $C(v,0)$ & 3 & 4 & 3 & 2 & 5 & 8 & 7 & 4 & 12 & 3 & 8\\
   $C(v,1)$ & 3 & 4 &  3 & 2 & 5 & 8 & $\infty$ & 4 & 12 & $\infty$ & 8\\
   $C(v,2)$ & 3 & $\infty$ & $\infty$ & 2 & $\infty$ & 8 & $\infty$ & 4 & 12 & $\infty$ & 8 \\
   $C(v,3)$ & $\infty$ & $\infty$ & $\infty$ & 2 & $\infty$ & $\infty$ & $\infty$ & 4 & 12 & $\infty$ & $\infty$\\
   $C(v,4)$ & $\infty$ & & & $\infty$ &  $\infty$ & &  & $\infty$ & $\infty$ & &  \\
   \hline
 \end{tabular}
\caption{Computation of $C(v,i)$-values of leaves.}
\label{tab:ex-leaf}
\end{table}

Table~\ref{tab:ex} is used for the computation of $e$, $m$ and $C$ 
values of inner nodes. During the computation process it is filled 
by main columns, where one main column consists of all inner nodes 
of the same level in the tree. So we start with node $e$. The 
contribution of its child $l$, $C(l,1)$, is $3$ as we computed in 
Table~\ref{tab:ex-leaf}. And as it is the only child, we have that 
the contributed requests on $e$ are less than the processing 
capacity $W$ which is fixed to $15$ and hence we do not need to 
place a replica on the child $l$ of $e$ to minimize the 
contribution on $e$. Corresponding we get $m(t(e)) = 0$, i.e. we 
do not need to place a replica in the subtree $t'(e)$, and a 
contribution $C(e,0) = 3$. $e(e,1)$ and $C(e,1)$ are computed in 
the same manner, taking into account $C(l,2)$. Computing $e(e,2)$, 
i.e. the nodes that have to be equipped with a replica if we want 
to minimize the contribution on node $a(e,2) = a$ by placing 
replicas on the children of $e$ but none on $e$ up to $a$. For 
this purpose we use $C(l,3)$, the contribution of $l$ on $a$ and 
remark that it is infinity. Hence we have $\infty > W = 15$ and 
so we have to equip $l$ with a replica, and as now the set 
$e(e,2)$ has a higher cardinality than $e(e,0)$, we know that this 
solution is not optimal anymore and we set the contribution of 
$C(e,2)$ to infinity (Eq.~\ref{eq:cont}).
Taking a look at node $j$: In the computation of $e(j,0)$, we have 
a total contribution of its children of $16$, which exceeds the 
processing power of $W=15$ (bandwidth and QoS are not restricting 
here). So we know that we have to equip one of the children with a 
replica, and we choose the one with the highest contribution on 
$j$: node $p$. Consequently, we get $m(t(j)) = 1$ as we have to 
place one replica on the children. The contribution $C(j,0)$ 
consists in the $4$ remaining contributed requests of node $o$. 
Once we have finished all computations for this level, we start 
with the computations of the next level, which can be found in the 
next main column of the table. Let us treat exemplarily node $c$. 
The sum of its children's contributions is 
$C(x,1) + C(g,1) + C(h,1) + C(i,1) = \infty$ as 
$C(g,1) = C(i,1) = \infty$. So we add $g$ and $i$ to $e(c,0)$ to 
lower the contributions to $C(x,1) + C(h,1) = 11$ which fits in 
the processing power of $W=15$. The outcome of this is 
$m(t(c)) = 2$ and the remaining requests lead to $C(c,0) = 11$.

\begin{table}
 \centering
 \begin{tabular}{|r|c c c|c c c|c c|}
   \hline
   & e & g & j & b & c & d & a & a$^+$\\
   \hline
   \hline
   $e(v,0)$ & $\emptyset$ & $\emptyset$ & $\{p\}$ & $\emptyset$ & $\{g,i\}$ & $\{k\}$ & $\{b,c\}$ & $\{a\}$\\
   $m(t(v))$ & 0 & 0 & 1 & 0 & 2 & 2 & 6 & 7\\
   $C(v,0)$ & 3 & 7 & 4 & 9 & 11 & 12 & 12 & $\infty$\\
   \hline
   $e(v,1)$ & $\emptyset$ & $\{n\}$ & $\{p\}$ & $\{e\}$ & $\{g,i\}$ & $\{k\}$ & $\{b,c,d\}$ & \\
   $C(v,1)$ & 3 & $\infty$ & 4 & $\infty$ & $\infty$ & 12 & $\infty$ & \\
   \hline
   $e(v,2)$ & $\{l\}$ & $\{n\}$ & $\{p\}$ & $\{e,f\}$ & $\{g,i\}$ & $\{j,k\}$ &  & \\
   $C(v,1)$ & $\infty$ & $\infty$ & 4 & $\infty$ & $\infty$ & $\infty$ & & \\
   \hline
   $e(v,3)$ & $\{l\}$ & $\{m,n\}$ & $\{o,p\}$ &  &  &  &  & \\
   $C(v,1)$ & $\infty$ & $\infty$ & $\infty$ &  & & & & \\
   \hline
 \end{tabular}
 \caption{Computation of $e$, $m$ and $C$ for internal nodes.}
\label{tab:ex}
\end{table}

\subsection{Phase 2: Top down replica placement}
\label{sec:phase2}
The second phase uses the precomputed results of the first phase 
to decide about the nodes on which to place a replica.
The goal is to place $m(T^{*}) = m(t(r^{+}))$ replicas in 
$t'(r^{+})$. Note that this means that there is no replica on 
$r^{+}$ and hence only the original tree $T$ will be equipped with 
replicas. If the workload on node $r$ is within $W$, we have a 
feasible solution.

Phase~2 is a recursive approach. Starting with $i = 0$ on node 
$v = r^{+}$, all nodes that are within
$e(v,i)$ are equipped with a replica. In this top down approach, 
$i$ indicates the distance of node $v$ to its first ancestor up 
in the tree that is equipped with a replica and hence the set 
$e(v,i)$ denotes the set of children of $v$ that have to be 
equipped with a replica in order to minimize the contribution of 
$v$ on $a(v,i)$.
Next the procedure is called recursively with the appropriate 
index $i$.
Algorithm~\ref{algo:top-down} gives the pseudo-code for the top 
down placement phase, which is the same as the one in~\cite{PangfengLiu06}.

\begin{algorithm}
\label{algo:top-down}
\caption{Top down replica placement}
procedure {\bf Place-replica} (v, i)\\
\Begin{
    \If{$v \in \CC$}
       {
	 return\;
       }
       place a replica at each node of $e(v,i)$\;
       \ForAll{$c \in \child(v)$}
	      {
		\eIf{$c \in e(v,i)$}
		    {
		      Place-replica(c,0)\;
		    }
		    {
		      Place-replica(c,i+1)\;
		    }
	      }
  }
\end{algorithm}

\subsection{Example of Phase 2}
\label{sec:ex2}
We start with the results of Phase 1 (Cf. Table~\ref{tab:ex-leaf} 
and~\ref{tab:ex}) and call then the procedure Place-replica 
(Algorithm~\ref{algo:top-down}) with $(a^{+},0)$. $a^{+}$ is not 
a leaf, so we place a replica on its child $a$, as $a \in e(a,0)$ 
and then recall the procedure with $(a,0)$. This time we place 
replicas on $b$ and $c$ and call the procedure with values 
$(b,0), (c,0)$ and $(d,1)$. We have to increment $i$ to $1$ when 
we treat node $d$, as we already know that we will not equip $d$ 
with a replica, and hence the children of $d$ might give their 
contribution directly to $a$. So we have to examine which of the 
children of $d$ have to be equipped with a replica, to minimize 
the contribution on $a$. This is stored in the $e(v_j,1)$-values 
of all children $v_j$ of $d$. So every time we do not place a 
replica on a node and descent to its children, we increase the 
distance-indicator $i$ to the first replica that can be found the 
way up to the root. The recursive procedure call for the entire 
example is given in Table~\ref{tab:rec-call}. PR(x,i) stands for 
the call of Place-replica with parameters (x,i) and $\rightarrow x$ 
indicates that node $x$ is equipped with a replica.

\begin{table}
  \begin{tabular}{l l l l l l l l l}
    PR(a$^{+}$,0) & & & & & & & \\
    \hline
    $\rightarrow$ {\bf a} & & & & & & & \\
    PR(a,0) & & & & & & & \\
    \hline
    $\rightarrow$  {\bf b,c} & & & & & & & \\
    PR(b,0) & & PR(c,0) & & & & PR(d,1) & & \\
    \hline
    & & $\rightarrow$ {\bf g,i} & & & & $\rightarrow$ {\bf k} & & \\
    PR(e,1) & PR(f,1) & PR(g,0) & & PR(h,2) & PR(i,0) & PR(j,2) & & PR(k,0) \\
    \hline
    & ret &  & & ret & ret &  $\rightarrow$ {\bf p} & & ret \\
    PR(l,2) & & PR(m,1) & PR(n,0)& & & PR(o,3) & PR(p,0) &  \\
    \hline
    ret & & ret & ret & & & ret & ret & \\
    \hline
  \end{tabular}
\label{tab:rec-call}
\caption{Scheme on the recursive calls of the procedure \textbf{Place-replica}}
\end{table}

\subsection{Complexity}
\label{sec:complexity}
Let us take a look on the complexity of \XX. For each node $v$ we 
have to compute $e$, $m$ and $C$ values. So the computation 
requires $n \log n$, if $v$ has $n$ children and if we sort the 
$C$ values from all of $v$'s children. We have to do at most L 
sorting, where L is the maximum range limit among all nodes. So 
at all the computation complexity for the values for one node is 
$L n \log n$, and we get a total complexity of $L N \log N$, 
where $N$ is the number of nodes in the tree.

\subsection{Optimality}
\label{sec:optimality}
In this section we prove optimality of our algorithm \XX by 
recursion over levels. For this purpose we apply a theorem 
introduced by Liu et al.~\cite{PangfengLiu06} and presented below 
as Theorem~\ref{th:opt}. Liu et al. used this theorem in order to 
prove the existence of an optimal solution on a homogeneous data 
grid tree under QoS constraints. As the theorem does not take into 
account if there are any constraints like QoS or bandwidth, we can 
adopt it for our problem.

\begin{theorem}
\label{th:opt}
Consider a data grid tree $T$, a node $v$ in $T$ with children 
$v_1,..,v_n$ and a workload $W$. There exists a replica set $R$ so 
that $|R| = m(T)$, $R$ minimizes the total workload due to $R$ 
from $t'(v)$ on $a(v,i)$ for $i \geq 1$, and 
$|R \cap t'(v_j)| = m(t(v_j))$.
\end{theorem}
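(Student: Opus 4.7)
The plan is to prove the theorem by induction on the height of the subtree $t(v)$. The base case in which $v$ is a leaf is immediate: $t'(v)$ is empty, so the cardinality condition on the children is vacuous and any set $R$ of minimum cardinality $|R| = m(T)$ contributes zero workload from $t'(v)$ on every ancestor.

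For the inductive step, I would start from any valid replica set $R^*$ with $|R^*| = m(T)$ that is already optimal for the second property, i.e., minimizes the workload of $t'(v)$ on $a(v,i)$ for every $i \geq 1$; such an $R^*$ exists by first minimizing the cardinality among all valid sets and then breaking ties lexicographically by the contributions on successive ancestors. Writing $k_j := |R^* \cap t'(v_j)|$, the goal is to argue that $R^*$ can be modified to satisfy $k_j = m(t(v_j))$ for every child without losing either $|R^*| = m(T)$ or the minimality of the contribution. The key structural fact that makes this possible is that under the \CLOSEST policy the contribution of $t(v_j)$ on $v$ and on every strict ancestor of $v$ depends only on $R \cap t(v_j)$, so one can reason about each child independently.

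The heart of the proof is an exchange argument that splits into two cases. If $k_j > m(t(v_j))$, the inductive hypothesis applied to $v_j$ provides a set $R_j \subseteq t(v_j)$ with $|R_j| = m(t(v_j))$ that still respects the capacity bound $\W$ on $v_j$; substituting $R^* \cap t'(v_j)$ by $R_j$, and, when necessary to preserve the contribution on $v$ and above, additionally placing a replica on $v_j$ itself, either strictly decreases $|R^*|$ (contradicting $|R^*| = m(T)$) or keeps $|R^*|$ equal while making the contribution from $t(v_j)$ on $v$ no larger than before. If instead $k_j < m(t(v_j))$, the contribution on $v_j$ strictly exceeds $\W$; hence $v_j \notin R^*$ (otherwise $v_j$'s capacity would be violated), and the surplus requests propagate above $v_j$ and are absorbed by a replica strictly higher in the tree. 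A symmetric exchange then relocates such an absorbing replica down into $t'(v_j)$; this case is the main obstacle of the proof, because one must check that the relocation induces no capacity or bandwidth violation at any node on the path between $v_j$ and that replica, and no QoS violation for any client below $v_j$. Once $k_j = m(t(v_j))$ holds for every $j$, reapplying the inductive hypothesis inside each $t(v_j)$ allows me to additionally choose $R^* \cap t(v_j)$ so that the contribution from $t(v_j)$ on $v$ and on every higher ancestor is simultaneously minimized, which closes the induction.
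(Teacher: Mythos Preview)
Your overall exchange strategy matches the paper's, but you have misidentified where the difficulty lies. The case $k_j < m(t(v_j))$ is in fact impossible for any \emph{valid} replica set $R^*$, so no relocation argument is needed there. By definition, $m(t(v_j))$ is the minimum number of replicas one can place in $t'(v_j)$ while keeping the total requests reaching $v_j$ at most $W$ (and respecting QoS and bandwidth inside $t(v_j)$). If $|R^* \cap t'(v_j)| < m(t(v_j))$, then either some client in $t(v_j)$ is already unserviceable, or strictly more than $W$ requests reach $v_j$. Under \CLOSEST all of those requests are handled by a single replica $u$ at or above $v_j$; any request that cannot reach $u$ because of QoS or bandwidth is unserved, and the requests that do reach $u$ already give $u$ a workload exceeding $W$. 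Either way $R^*$ is infeasible. Hence the lower bound $k_j \geq m(t(v_j))$ holds for every feasible set, and what you call ``the main obstacle of the proof'' is vacuous.

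For the only non-trivial direction, $k_j > m(t(v_j))$, your argument coincides with the paper's: replace $R^* \cap t'(v_j)$ by an optimal placement of $m(t(v_j))$ replicas in $t'(v_j)$ and, if needed, add one replica on $v_j$ itself. This uses at most $m(t(v_j)) + 1 \leq k_j$ replicas and drives the contribution of $t(v_j)$ on every ancestor $a(v,i)$ down to zero, so neither $|R^*| = m(T)$ nor the workload-minimizing property is lost. The paper gives exactly this direct exchange argument, without induction on the height of $t(v)$; once the vacuous case is discarded, your induction is unnecessary as well.
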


In other words, Theorem~\ref{th:opt} guarantees that for a tree 
$T$ with fixed processing capacity $W$ there exists a replica set 
$R$ whose cardinality is the minimum number of replicas that has 
to be placed in $t'(r)$ (where $r$ is the root of $T$), such that 
the remaining requests on $r$ are within $W$. Furthermore for a 
node $v$ with children $v_1,...,v_n$, due to $R$ the workload on a 
ancestor $a(v,i)$ of $v$ is minimized and the number of replicas 
that are placed in the subtree $t'(v_j)$ is minimal.

\begin{proof}
We can use the same arguments as Liu et al. as we did not change 
the definition of $m$-values but the constraints on $m$.
By definition of $m(t(v_j))$, we know that this is the minimal 
number of replicas that has to be placed in $t'(v_j)$ such that 
the contribution on $v_j$ is within $W$. Hence $|R \cap t'(v_j)|$ 
can not be less than $m(t(v_j))$ because otherwise the 
contribution on $v_j$ would exceed $W$. On the other side in any 
optimal solution for $t(v_j)$, we can not place more replicas in 
$t'(v_j)$ than $m(t(v_j))$ and than one more on $v_j$. The 
resulting contribution on $a(v,i)$ decreases at most when placing 
the replica on $v_j$.
\end{proof}

\begin{theorem}
Algorithm \XX returns an optimal solution to the \REP problem with 
fixed $W$, QoS and bandwidth constraints, if there exists a 
solution.
\end{theorem}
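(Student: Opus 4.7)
The plan is to prove optimality by structural induction over the levels of $T^{*}$, showing that at every node $v$ the quantities $m(t(v))$, $C(v,i)$ and $e(v,i)$ computed in Phase~1 are exactly the values guaranteed to exist by Theorem~\ref{th:opt}, so that the top-down Phase~2 realizes a globally optimal placement.

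First I would handle the base case: if $v$ is a leaf of $T^{*}$ (so either a compressed client or a node with no children in $\NN$), then no replica can be placed strictly inside $t'(v)=\emptyset$, and the formula $C(v,i)=w(v)$ whenever $i\le\qos(v)$ and the path up to $a(v,i)$ has bandwidth at least $w(v)$, and $\infty$ otherwise, is clearly correct: this is the only workload contribution that the leaf can possibly make on $a(v,i)$, and $\infty$ correctly encodes infeasibility (either a QoS or a bandwidth link forbids the requests to travel that far). Hence the inductive hypothesis holds at height~$0$.

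For the inductive step, assume that for every child $v_j$ of $v$ the values $m(t(v_j))$ and $C(v_j,i)$ are optimal in the sense of Theorem~\ref{th:opt}, i.e.\ $m(t(v_j))$ is the minimum number of replicas that any feasible placement must put inside $t'(v_j)$, and $C(v_j,i)$ is the minimum contribution that $t(v_j)$ can deliver on $a(v_j,i)$ among all placements achieving that minimum. I would then prove the following two facts for $v$. (i) \emph{$m(t(v))$ is minimal.} Any feasible configuration must place at least $m(t(v_j))$ replicas inside each $t'(v_j)$ by the inductive hypothesis, and, regarding the remaining choice of which children themselves to equip, the incoming workload on $v$ equals $\sum_{v_j\notin S} C(v_j,1)$ for the set $S$ of equipped children, since a child $v_j\in S$ shields its subtree contribution; keeping this sum below $W$ with a minimum-size $S$ is exactly a knapsack-style problem solved greedily by removing the largest $C(v_j,1)$ first, which is what the algorithm does to build $e(v,0)$. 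An exchange argument shows that any $S$ satisfying the capacity constraint with $|S|<|e(v,0)|$ would contradict the greedy choice. Thus $m(t(v))=\sum_j m(t(v_j))+|e(v,0)|$. (ii) \emph{$C(v,i)$ is minimal among all placements of size $m(t(v))$ inside $t'(v)$ with no replica on $a(v,j)$ for $0\le j<i$.} Here the sum $\sum_{v_j\notin e(v,i)} C(v_j,i+1)$ is the minimum obtainable by the induction hypothesis applied at height $i+1$ on each child, and the greedy rule for constructing $e(v,i)$ minimizes the leftover contribution on $a(v,i)$. The side condition $|e(v,i)|=|e(v,0)|$ in~(\ref{eq:cont}) is precisely what forces the total count to stay at $m(t(v))$: otherwise we would have to use strictly more than $m(t(v))$ replicas in $t'(v)$, contradicting minimality, so setting $C(v,i)=\infty$ in that case correctly signals infeasibility of the present configuration.

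Once Phase~1 has been shown to compute the functions asserted by Theorem~\ref{th:opt}, the correctness of Phase~2 is nearly immediate: the recursive procedure \textbf{Place-replica} realizes, level by level, the choices $e(v,i)$ whose existence Theorem~\ref{th:opt} guarantees, with $i$ being updated to track the exact distance to the closest equipped ancestor, which is the relevant index under the \CLOSEST policy. Applying the whole argument to the artificial root $r^{+}$ yields a placement of $m(t(r^{+}))=m(T^{*})$ replicas in $t'(r^{+})=T$ that respects both QoS and bandwidth; conversely, if any step of Phase~1 produces $\infty$ up to the root, no feasible solution exists, which matches the hypothesis of the theorem. The main obstacle, and the only part that requires real care, is the exchange argument in~(i) and~(ii) showing that the greedy rule ``equip the children with the largest $C(v_j,i+1)$ first'' is simultaneously optimal for the cardinality $|e(v,i)|$ and for the leftover contribution; this has to be argued carefully because bandwidth on a link is shared and so removing a child from $e(v,i)$ in favour of another is not always feasible, but since by construction $C(v_j,i+1)$ already encodes all QoS and bandwidth restrictions inside $t(v_j)$ and along the link to $v$, the exchange is legal as long as both candidates have finite $C$, which is exactly the situation in which the greedy choice matters.
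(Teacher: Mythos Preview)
Your proposal is essentially correct but follows a genuinely different route from the paper.  You argue \emph{bottom-up}, from the leaves to the root, that the Phase~1 quantities $m(t(v))$ and $C(v,i)$ are themselves optimal (your points~(i) and~(ii)), and then claim that Phase~2 merely realises these values.  The paper instead argues \emph{top-down}, from the artificial root $r^{+}$ downward: it maintains, level by level, an optimal replica set $R_i$ that agrees with the algorithm on all levels $0,\dots,i$, and at each step performs an exchange---replace the children of $v$ that lie in $R_i$ by those in $e(v,l)$, where $l$ is the distance from $v$ to its nearest equipped ancestor---to obtain $R_{i+1}$, still optimal.  Theorem~\ref{th:opt} is invoked at each level to guarantee $|R_i\cap t'(v_j)|=m(t(v_j))$, which forces $|\{j:v_j\in R_i\}|=|e(v,l)|$ and makes the swap cost-free.

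Your approach has the merit of making the semantics of $m$, $C$ and $e$ fully explicit, and the greedy/exchange argument you sketch for~(i) and~(ii) is sound.  Its weakness is the last paragraph: ``Phase~2 is nearly immediate'' hides precisely the work the paper carries out.  To complete your line you would still need to show that the recursive calls, with the running index~$i$, place exactly $m(t(v))$ replicas in every $t'(v)$ they visit and that the resulting placement is globally feasible; this follows from $|e(v,i)|=|e(v,0)|$ whenever $C(v,i)<\infty$, together with a short induction on the recursion, but it should be stated rather than asserted.  The paper's top-down exchange sidesteps this entirely: it never proves Phase~1 correctness in isolation, it only needs that the algorithm's choice at each level can be swapped into an already-known optimal solution without increasing its size or its upward contribution, which is a shorter argument once Theorem~\ref{th:opt} is in hand.
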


\begin{proof}
We perform an induction over levels to prove optimality. We 
consider any tree $T^{*}$ of hight $n+1$ and start at level~0, 
which consists in the artificial root $r^{+}$ 
(Cf. Figure~\ref{fig:inductionTopDown}).

\begin{figure}
\centering
\includegraphics[width=0.55\textwidth]{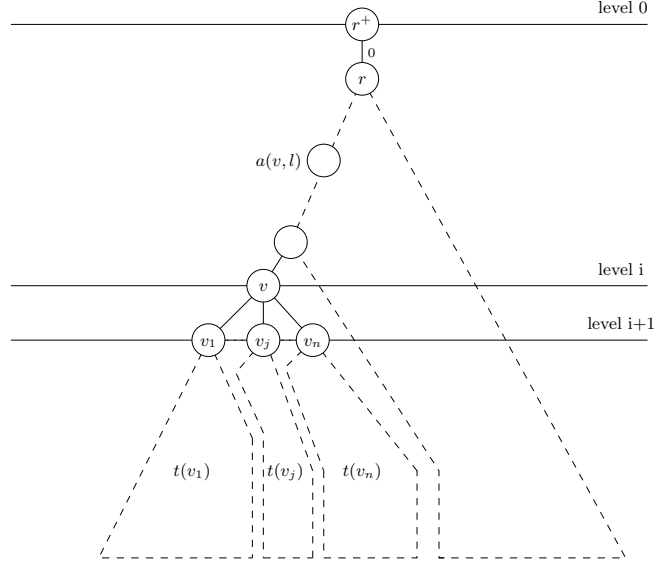}
\caption{Induction over levels.}
\label{fig:inductionTopDown}
\end{figure}

\begin{description}
\item[level 0:] Using Theorem~\ref{th:opt}, we know that there 
exists an optimal solution $R_0$ for our tree (i.e., a set $R$ of 
replicas whose cardinality is $m(T^{*})$) such that 
$|R_{0} \cap t'(r)| = m(t(r))$. We have 
$m(T^{*}) = m(t(r)) + |e(r^{+},0)|$ by definition of $e(r^{+},0)$. 
Hence $e(r^{+},0) = \{r\}$ if and only if $r \in R_{0}$. This is 
exactly how the algorithm pursuits.

\item[level i $\rightarrow$ i+1:]
We assume that we have placed the replicas from level~0 to 
level~i (with Algorithm~\ref{algo:top-down}) and that there exists 
an optimal solution $R_i$ with these replicas. We further suppose 
that for each node $v$ in level $i$ it holds 
$|R_{i} \cap t'(v)| = m(t(v))$. Let us consider a node $v$ in 
level $i$ with children $v_1, .., v_n$ and we define 
$l:= \min\{k\geq 0 | a(v,k) \in R_i\}$. In the next step of the 
algorithm we equip the elements of $e(v,l)$ with a replica. We 
have $m(t(v)) = \sum_{1\leq j \leq n}m(t(v_j))+|e(v,0)|$, i.e. the 
minimal number of replicas in the subtrees $t'(v_j)$ and the 
minimal number of replicas on the children of $v$ that have to be 
placed to keep the contributed requests on $v$ within $W$. By 
definition of $e(v,l)$ we have that 
$|\{j \in \{1, .., n\} | v_j \in R_i\}| \geq |e(v,l)|$ and we also
 have $|e(v,l)| = |e(v,0)|$ as the contribution $C(v,l)$ is finite 
and $R_i$ a solution. For the inequality, there is even equality 
because otherwise there would exist a $j$ such that 
$|t'(v_j) \cap R_i| < m(t(v_j))$, which is impossible. With this 
equality, we can replace the children of $v$ that are in $R_i$ by 
the children of $v$ that are in $e(v,l)$ creating a solution 
$R_{i+1}$. So $R_{i+1}$ is also an optimal solution, because 
$|R_i| = |R_{i+1}|$ (we did not change the nodes of the other 
levels) and the contribution of $t(v)$ on $a(v,l)$ has at most 
decreased. Furthermore for every node $v'$ at level i+1 we have 
$|R_{i+1} \cap t'(v')| = m(t(v'))$.
\end{description}

So the last solution $R_{n}$ that we get in the induction step $n$ 
is optimal and it corresponds to the solution that we obtain by 
our algorithm.
\end{proof}

\section{Conclusion}
\label{sec:conclusion}
In this paper we dealt with the \REP optimization problem with QoS 
and bandwidth constraints. We restricted our research on 
\CLOSEST/Homogeneous instances. We were able to prove 
polynomiality and proposed the optimal algorithm \XX. This 
algorithm extends an existing algorithm in two important areas. 
First the set of clients and the set of servers can be distinct 
now and does not require exclusively double-functionality nodes 
anymore. The other major contribution is the expansion to the 
interplay of different nature constraints. QoS, which is a proper 
constraint for each client, and bandwidth, a global resource 
limitation, subordinate to a common optimization function. This 
accomplishment completes furthermore the study on complexity of 
\CLOSEST/Homogeneous in tree networks.
\bibliographystyle{abbrv}
\bibliography{biblio}

\end{document}